\newtheorem{thm}{Theorem}
\newcommand{\false}{\ensuremath{\mathit{false}}}
\newcommand{\true}{\ensuremath{\mathit{true}}}
\newcommand{\ite}{\ensuremath{\mathbf{ite}}}
\newcommand{\var}[1]{{\tt #1}}
\newcommand{\sym}[1]{\ensuremath{\underline{#1}}}
\newcommand{\Pex}{\textsc{Pex}\xspace}
\newcommand{\Sage}{\textsc{Sage}\xspace}
\newcommand{\Cute}{\textsc{Cute}\xspace}
\newcommand{\Klee}{\textsc{Klee}\xspace}
\newcommand{\Exe}{\textsc{Exe}\xspace}
\newcommand{\APC}{\textsc{Apc}\xspace}
\newcommand{\Z}{\textsc{Z3}\xspace}
\newcommand{\aarg}[2]{#1~{\it //~#2}}
\newcommand{\aargm}[2]{\\ \qquad \aarg{#1}{#2}}
\newcommand{\aset}{\ensuremath{\longleftarrow}}
\begin{document}
 

\title{Abstracting Path Conditions}

\author{
   Jan Strej\v{c}ek ~~~~~~~ Marek Trt\'{\i}k \\
   {\small Faculty of Informatics} \\
   {\small Masaryk University}\\
   {\small Brno, Czech Republic}\\
	 {\small \texttt{\{strejcek,trtik\}@fi.muni.cz}}
}

\date{December 18, 2011}

\maketitle

\begin{abstract}
  We present a symbolic execution based algorithm that for a given program
  and a given program location produces a nontrivial necessary condition on
  input values to drive the program execution to the given location. We
  propose a usage of the produced condition in
  contemporary bug finding and test generation tools based on symbolic
  execution. Experimental results indicate that the presented technique can
  significantly improve performance of the tools.
\end{abstract}

%
%


\section{Introduction} 

Symbolic execution~\cite{BEL75,Kin76,How77} is enjoying a renaissance during
the last decade. The basic idea of the technique 
is to replace input data of a program by symbols representing arbitrary
data. Executed instructions then manipulate expressions over the symbols
rather than exact values. A symbolic execution produces, for each path in a
program flowgraph starting in the initial location, a formula called
\emph{path condition}, i.e.~the necessary and sufficient condition on input
data to drive the execution along the path. Symbolic execution is utilized
by many successful algorithms and tools for test generation and bug finding,
for example \Exe~\cite{Cadar08}, \Cute~\cite{SMA05}, \Klee~\cite{CDE08}, \Sage~\cite{GLM08:fuzzing}, or \Pex~\cite{TdH08}. These
tools can relatively quickly find tests that cover vast majority of a given
code. However, they usually fail to cover the code completely in a
reasonable time.
In this paper we suggest a method that helps the tools to cover a chosen
location and hence to further improve their performance.

The core of our method and the main contribution of the paper is an
algorithm that, for a given program and a given program location, produces a
nontrivial necessary condition on input values to drive the program
execution to the given location. An intuitive explanation of the algorithm
is illustrated on the following simple C++ program, where we want to compute a
necessary condition to reach the assertion on the last line.

\begin{verbatim}
void foo(int* A, int n) {
   int k = 3;
   for (int i = 0; i < n; ++i) {
      if (A[i] == 1)
         ++k;
   }
   if (k > 12)
      assert(false);
}
\end{verbatim}
It is easy to check (for human) that the assertion is reached when there is more then twelve numbers $1$ in array \texttt{A}. Figure~\ref{fig:ex1}~(a) depicts a flowgraph of \texttt{foo}. Note that nodes and edges that are not on any path to the target location $h$ have been removed.

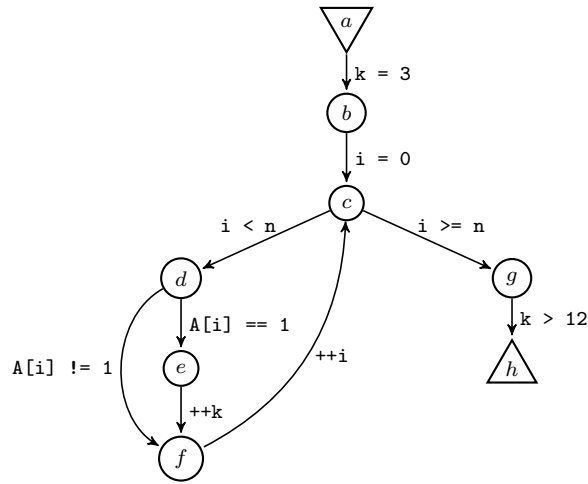
\begin{figure}[ht]
  \centering
  \tikzstyle{start} = [regular polygon,regular polygon sides=3,
    regular polygon rotate=180,thick,draw,inner sep=1.7pt]
  \tikzstyle{target} = [regular polygon,regular polygon sides=3,
    regular polygon rotate=0,thick,draw,inner sep=1pt]
  \tikzstyle{loc} = [circle,thick,draw]
  \tikzstyle{pre} = [<-,shorten <=1pt,>=stealth',semithick]
  \tikzstyle{post} = [->,shorten <=1pt,>=stealth',semithick]
  \footnotesize
  \begin{tikzpicture}[node distance=1.2cm]
    \node [start] (a) {$a$};
    \node [loc] (b) [below of=a] {$b$}
      edge [pre] node [label=right:\texttt{k = 3}] {} (a);
    \node [loc] (c) [below of=b] {$c$}
      edge [pre] node [label=right:\texttt{i = 0}] {} (b);
    \node [loc] (d) [left of=c,xshift=-10mm,yshift=-10mm] {$d$}
      edge [pre] node [label=above:\texttt{i < n~~~~}] {} (c);
    \node [loc] (e) [below of=d] {$e$}
      edge [pre] node [label=right:\texttt{A[i] == 1}] {} (d);
    \node [loc] (f) [below of=e] {$f$}
      edge [pre,bend left=60] node [label=left:\texttt{A[i] != 1}] {} (d)
      edge [pre] node [label=right:\texttt{++k}] {} (e)
      edge [post,bend right] node [label=right:\texttt{++i}] {} (c);
    \node [loc] (g) [right of=c,xshift=10mm,yshift=-10mm] {$g$}
      edge [pre] node [label=above:\texttt{~~~~i >= n}] {} (c);
    \node [target] (h) [below of=g] {$h$}
      edge [pre] node [label=right:\texttt{k > 12}] {} (g);
  \end{tikzpicture}
  \caption{Flowgraph of the running example.}
  \label{fig:ex1}
\end{figure}

As the first step of our algorithm, we find all nontrivial maximal strongly
connected components in the flowgraph. For each entry node $x$ of each
component (i.e~there is an edge leading to $x$ from a vertex outside the component), we compute a \emph{summary} of the overall effect of
iterating within the component, since the first visit of $x$ till the last
visit of $x$. The summary is described by an \emph{iterated symbolic state} and \emph{looping condition}. An iterated symbolic state is a function that assigns to each program variable its value given by an
expression over \emph{symbols} and \emph{path counters}. Symbols represent
initial values of variables (for each variable $\var{v}$ the symbol is
denoted by $\sym{v}$). Path counters $\kappa_1,\kappa_2,\ldots$ correspond
to different acyclic paths leading from $x$ to $x$ within the
component. Each path counter represents the number of iterations of the corresponding path. A looping condition is a nontrivial formula implied by any of path conditions resulting from any symbolic execution of the component.

In our example, there is only one nontrivial maximal strongly connected component
$\{c,d,e,f\}$ with one entry node $c$. There are two acyclic paths through
the component: $\pi_1=\mathit{cdefc}$ and $\pi_2=\mathit{cdfc}$. We assign
path counters $\kappa_1,\kappa_2$ to $\pi_1,\pi_2$ respectively. The overall
effect of the component with respect to the entry node $c$ can be described
by the iterated symbolic state $\theta^{\vec{\kappa}}$ with only two
interesting values (as the other variables are not changed in the
component):
\[
\begin{array}{rcl}
\theta^{\vec{\kappa}}(\var{i})&=&\kappa_1+\kappa_2+\sym{i}\\
\theta^{\vec{\kappa}}(\var{k})&=&\kappa_1+\sym{k}\\
\end{array}
\]
In other words, by $\kappa_1$ iterations of $\pi_1$ and $\kappa_2$
iterations of $\pi_2$ executed in an arbitrary order, the values of
$\var{i}$ and $\var{a}$ are increased by $\kappa_1+\kappa_2$ and $\kappa_1$,
respectively.

Further, for every component and its entry node $x$ we compute a
looping condition $\varphi^{\vec{\kappa}}$. Given path counters
$\kappa_1,\kappa_2,\ldots$, formula $\varphi^{\vec{\kappa}}$ describes a
necessary condition to keep looping in the component for $\sum_i\kappa_i$
iterations such that, for each $i$, exactly $\kappa_i$ iterations use
path $\pi_i$. More precisely, a looping condition is a conjunction of
subformulae $\varphi_i$ corresponding to the acyclic paths $\pi_i$. Each
subformula $\varphi_i$ says that, for each of the $\kappa_i$ iterations
along the path $\pi_i$, all tests on the path must be satisfied for some
possible values of variables, i.e.~for some values given by the iterated
symbolic state and some admissible values of path counters.

In the example, the looping condition for the component $\{c,d,e,f\}$ with
the entry node $c$ has the form
$\varphi^{\vec{\kappa}}=\varphi_1\wedge\varphi_2$. We focus on the
construction of $\varphi_1$ which corresponds to path $\pi_1=\mathit{cdefc}$ with two
tests: \texttt{i < n} and \texttt{A[i] == 1}. The iterated symbolic state
says that values of $\var{i}$, $\var{n}$, and $\texttt{A[i]}$ in
$(\tau_1+1)$-st iteration of $\pi_1$ and after $\tau_2$ iterations of
$\pi_2$ are $\tau_1+\tau_2+\sym{i}$, $\sym{n}$, and
$\sym{A}(\tau_1+\tau_2+\sym{i})$ respectively. Hence, if we want to make
$\kappa_1$ iterations of $\pi_1$ and $\kappa_2$ iterations of $\pi_2$, the
formula $\varphi_1$ says that for each $\tau_1$ satisfying
$0\le\tau_1<\kappa_1$ there has to be some $\tau_2$ satisfying
$0\le\tau_2\le\kappa_2$ such that $\tau_1+\tau_2+\sym{i}<\sym{n}$ and
$\sym{A}(\tau_1+\tau_2+\sym{i})=1$. The complete looping condition for our
example is as follows:
\[
\setlength{\arraycolsep}{2pt}
\begin{array}{rcl}
  \varphi^{\vec{\kappa}}&\equiv&\varphi_1\wedge\varphi_2\\[1ex]
  \varphi_1&\equiv&\forall\tau_1\big(0\le\tau_1<\kappa_1\rightarrow\exists\tau_2(0\le\tau_2\le\kappa_2~\wedge\\
  &&~~~~~~~~\wedge~\tau_1+\tau_2+\sym{i}<\sym{n}~~\wedge~~\sym{A}(\tau_1+\tau_2+\sym{i})=1)\big)\\[1ex]
  \varphi_2&\equiv&\forall\tau_2\big(0\le\tau_2<\kappa_2\rightarrow\exists\tau_1(0\le\tau_1\le\kappa_1~\wedge\\
  &&~~~~~~~~\wedge~\tau_1+\tau_2+\sym{i}<\sym{n}~~\wedge~~\sym{A}(\tau_1+\tau_2+\sym{i})\neq1)\big)
\end{array}
\]


The resulting summary of the component is a pair $(\theta^{\vec{\kappa}}, \varphi^{\vec{\kappa}})$. We attach the summary at the entry node $c$ and we can proceed to analysis of 
the path $\mathit{abcgh}$ in the flowgraph. We symbolically execute the path as usual. Only at loop entry $c$ we add the saved summary into the current symbolic state and current path condition. The abstract
path condition (and thus also the final result of our technique) is the
following formula $apc$, where $\varphi^{\vec{\kappa}}[\sym{i}/0,\sym{a}/3]$
is the looping condition computed above with $\sym{i}$ replaced by $0$ and
$\sym{a}$ replaced by $3$.
\[
\setlength{\arraycolsep}{0pt}
\begin{array}{rl}
  apc\equiv\exists\kappa_1,\kappa_2(
  &\kappa_1,\kappa_2>0~\,\wedge\,~\varphi^{\vec{\kappa}}[\sym{i}/0,\sym{a}/3]~\,\wedge\\
  &\wedge\,~\kappa_1+\kappa_2\ge n~\wedge~\kappa_1+3>12)
\end{array}
\]

To sum up, our technique produces a formula $apc$ that has to be
satisfied by all inputs driving the execution to the given location. In
general, the formula is not a sufficient condition on inputs to reach the
target location. This has basically two reasons. 
\begin{itemize}
\item It is not always possible to express the overall effect of a strongly
  connected component to a variable in a declarative way. In such a case,
  the variable is assigned the special value $\star$ with the meaning
  ``unknown''.  If we symbolically execute a test containing a variable with
  the value $\star$, we do not add this test to our abstract path
  condition. Similarly, the tests containing $\star$ are not added to
  looping conditions.
\item The looping condition is constructed as a necessary but not a sufficient
  condition. More precisely, it checks whether tests in each iteration are
  satisfied for the iterated symbolic state with some admissible values of
  path counters, but the consistency of these admissible values over all
  iterations is not checked.
\end{itemize}

In the following sections, we explain our algorithm in more detail. After
providing some preliminaries (Section~\ref{sec:prelim}), we present the
basic version of the algorithm for flowgraphs with integer arithmetic and
read-only multi-dimensional arrays and without function calls
(Section~\ref{sec:alg}). Then we indicate necessary changes to the algorithm
to work with programs that can modify arrays (Section~\ref{sec:alg2}). In
the same way as arrays, the algorithm can also handle flowcharts
manipulating content of lists (we currently do not support programs changing
shape of lists). 
To demonstrate
efficiency of our approach, we provide experimental results of a
prototype implementation of our algorithm on several small examples
(Section~\ref{sec:experiments}). The results show that in some cases, the
application of our algorithm can discover a bug in a code much faster than
selected bug finding tools do. Therefore we suggest possible utilization our
algorithm in contemporary bug finding and test generation tools
(Section~\ref{sec:apps}). Finally, we discuss some related work
(Section~\ref{sec:related}) and conclude the paper
(Section~\ref{sec:conclusion}).


\section{Preliminaries}\label{sec:prelim}

This section defines some terms heavily used in the rest of the paper, in
particular terms related to \emph{program} and \emph{symbolic state}.

\subsection{Program}
The algorithm works with programs in the form of flowcharts. A target
location is a distinguished node of the flowchart and it has no
successor. Moreover, we assume that the flowchart contains only nodes from
which the targer node is reachable. Formally, a \emph{program} is a tuple
$P=(V_P,E_P,l_s,l_t,\iota_P)$ such that $(V_P,E_P)$ is a finite connected
oriented graph, nodes $V_P$ represent program locations, edges $E_P\subseteq
V_P\times V_P$ represent control flow between them, $l_s,l_t\in V_P$, $l_s
\ne l_t$ are \emph{start} and \emph{target nodes} respectively. A node is
\emph{branching} if its out-degree is $2$. All other nodes, except $l_t$,
have out-degree $1$. In-degree of $l_s$ and out-degree of $l_t$ are both
$0$. Function $\iota_P:E_P\rightarrow\mathcal{I}$ assigns to each edge $e$
an \emph{instruction} $\iota(e)$. We use two kinds of instruction: an
assignment instruction $\var{v}\aset e$ for some scalar variable $\var{v}$
and some expression $e$, and an assumption $\texttt{assume}(\gamma)$ for
some quantifier-free formula $\gamma$ over program variables. Out-edges of
any branching node are labelled with instructions $\texttt{assume}(\gamma)$
and $\texttt{assume}(\neg\gamma)$ for some $\gamma$. Further, we assume that
all instructions in $\mathcal{I}$ use only linear integer arithmetic. 
By $\mathcal{V}_\var{a}$ and
$\mathcal{V}_\var{A}$ we denote the sets of all \emph{scalar variables} and
\emph{array variables} occurring in $P$, respectively. And $\mathcal{V} = \mathcal{V}_\var{a} \cup \mathcal{V}_\var{A}$ is a set of all variables.
When program $P$ is clearly determined by a context, we omit the subscript
$P$ in $V_P,E_P,\iota_P$.


A \emph{path} in a program is a finite sequence $\pi = v_1 v_2 \cdots v_k$
of program nodes such that $(v_i, v_{i+1})\in E$ for all $1\le i<k$. Paths
are always denoted by greek letters. A path leading from $l_s$ to $l_t$ is
called \emph{complete} path.

Instead of strongly connected components, our algorithm works with
\emph{loops}. In contrast to components, loops can be nested. Let $\pi$ be
an acyclic path from the initial node $l_s$ and let $\alpha$ be a prefix of
$\pi$ leading to a node $v$. The node $v$ on $\pi$ is an \emph{entry node}
of a loop if there exists a path $v\beta v$ such that none of the nodes on
$\beta$ appears in $\alpha$. The entry node $v$ on $\pi$ enters the loop
$C$ that is the smallest set containing $v$ and all nodes in
$\beta$ for each path $v\beta v$ such that none of the nodes on $\beta$
appears in $\alpha$. For example, program in Figure~\ref{fig:loops} contains
two acyclic complete paths: $\pi_1=\mathit{l_sbdl_t}$ and
$\pi_2=\mathit{l_sabdl_t}$. While $\pi_1$ contains only one entry node $b$
associated with loop $\{a,b,c,d\}$, $\pi_2$ contains entry node $a$ with
loop $\{a,b,c,d\}$ and entry node $b$ with loop $\{b,c\}$.
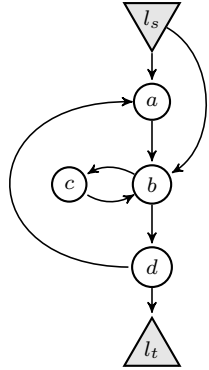
\begin{figure}[tb]
  \centering
  \tikzstyle{start} = 
    [regular polygon,regular polygon sides=3,thick,draw,inner sep=1pt]
  \tikzstyle{loc} = [circle,thick,draw]
  \tikzstyle{pre}=[<-,shorten <=1pt,>=stealth',semithick]
  \tikzstyle{post}=[->,shorten <=1pt,>=stealth',semithick]
  \footnotesize
  \begin{tikzpicture}[node distance=1.1cm]
    \node [start,fill=black!10,regular polygon rotate=180] (ls) {$l_s$};
    \node [loc] (a) [below of=ls] {$a$}
      edge [pre] (ls);
    \node [loc] (b) [below of=a] {$b$}
      edge [pre] (a)
      edge [pre,bend right=60] (ls);
    \node [loc] (c) [left of=b] {$c$}
      edge [pre,bend left] (b)
      edge [post,bend right] (b);
    \node (x) [left of=c] {}; 
    \node [loc] (d) [below of=b] {$d$}
      edge [pre] (b)
      edge [post,bend left=90,looseness=2.5] (a);
    \node [start,fill=black!10] (lt) [below of=d] {$l_t$}
      edge [pre] (d);
  \end{tikzpicture}
  \caption{Example of nested loops.}
  \label{fig:loops}
\end{figure}
A node $u$ is an exit node of $C_v$ if there exists $w \in
C_v$ such that $(w,u)\in E$. 

For a loop $C$ with an entry node $v$, a \emph{program induced by the loop},
denoted as $P(C,v)$, is the subgraph of the original program induced by $C$
where $v$ is marked as the start node, a fresh node $v'$ is added and marked
as the target node, and every edge $(u,v)\in E$ leading to $v$ is replaced
by an edge $(u,v')$.

Let $\pi$ be a complete path. We define a \emph{backbone} of $\pi$ as the
result of the following procedure: If $\pi$ is acyclic, then the backbone is
directly $\pi$. Otherwise, $\pi$ can be written as $\alpha v\beta v\gamma$
where $v$ is the first repreating node in $\pi$ and $\gamma$ does not
contain $v$. In this case, we set $\pi$ to $\alpha v\gamma$ and repreat the
procedure. By $B_P$ we denote the set of all backbones of all complete
paths. Note that backbones are exactly all acyclic complete
paths. Alternatively, backbone of a complete path $\pi$ can be defined as
the path from $l_s$ where the successor of each node $u$ is the same as the
successor of the last occurence of $u$ in $\pi$.

\subsection{Symbolic State}
The set $\mathcal{S}$ of \emph{symbolic expressions} contains all
expressions build with integers, standard integer operations and functions,
and
\begin{itemize}
\item a \emph{constant symbol} $\sym{a}$ for each scalar variable
  $\var{a}\in\mathcal{V}_\var{a}$,
\item a \emph{function symbol} $\sym{A}$ for each array variable
  $\var{A}\in\mathcal{V}_\var{A}$, where $\sym{A}$ has the same arity as $\var{A}$,
\item a countable set $\{\kappa_1,\tau_1,\kappa_2,\tau_2,\ldots\}$ of
  variables called \emph{path counters},
\item a special construct $\ite(\varphi,e_1,e_2)$ where $e_1,e_2$ are
  expressions and $\varphi$ is a first order formula over the same signature
  extended with standard relation symbols, and
\item a special constant symbol $\star$ called \emph{unknown}.
\end{itemize}
The value of $\ite(\varphi,e_1,e_2)$ is the same as $e_1$ if $\varphi$ holds
and the same as $e_2$ otherwise. The domain of integers is extended with a
new special value $\bot$. All expressions containing $\star$ are interpreted
to $\bot$ while the other expressions are never interpretted to $\bot$. In
the following we identify every expression containing $\star$ with $\star$.

Let $f,e_1,e_2,\ldots,e_n$ be symbolic expressions and $x_1,x_2,\ldots,x_n$
be some path counters or constant symbols corresponding to scalar
variables. Then $f[x_1/e_1,$ $x_2/e_2,\ldots,x_n/e_n]$ is an symbolic
expression $f$ where all occurences of $x_i$ are replaced by $e_i$,
simultaneously for all $i$. To shorten the notation, we also write
$f[\vec{x}/\vec{e}]$ when the meaning is clerly given by a context. we also
use the notation $\varphi[\vec{x}/\vec{e}]$ with the analogous meaning.

A \emph{symbolic state} is a function
$\theta:\mathcal{V}\rightarrow\mathcal{S}$ assigning to each variable \var{a} a symbolic expression $\theta(a)$ . We define \emph{initial
  symbolic state} $\theta_I$ and \emph{unknown symbolic state}
$\theta_\star$ as
\[\theta_I(\var{a})=\sym{a},~\theta_I(\var{A})=\lambda\vec{\chi}.\sym{A}(\vec{\chi}) \textrm{~~and~~}\theta_\star(\var{a})=\star,~\theta_\star(\var{A})=\lambda\vec{\chi}.\star\]
for each $\var{a}\in\mathcal{V}_\var{a}$ and $\var{A}\in\mathcal{V}_\var{A}$. Note that $\lambda-$expressions for scalar variables can be omitted (and they actually were), since symbols $\sym{a}$ are constants. We use the notation $\theta(\cdot)$ in a more general way. For a given expression over program variables it always denotes the operation of replacing
each variable $\var{a}\in\mathcal{V}$ by the symbolic expression $\theta(\var{a})$. Moreover, we further extend the operation for formulae over program variables. We note that predicates containing $\star$ in some of its terms are immediately reduced to $\true$.

Let $\theta$ be a symbolic state, $\var{a}\in \mathcal{V}$ be a variable and $e$ be a symbolic expression. Then
$\theta[\var{a}\rightarrow e]$ is a symbolic state equal to $\theta$ except
for variable $\var{a}$, where $\theta[\var{a}\rightarrow e](\var{a})=e$.
Further, $\theta \langle e  \rangle$ denotes a symbolic expression derived from $e$ by simultaneously replacing all occurrences of each symbol $\sym{a}$ by symbolic
expression $\theta(\var{a})$. We also extend the notation $\theta \langle \cdot \rangle$ for formulae in natural way. We only note that predicates containing $\star$ in some its terms are immediately reduced to $\true$. Finally, we extend the notation $\theta \langle \cdot \rangle$ to symvolic states: $\theta \langle \theta' \rangle$ is a symbolic state satisfying
$\theta \langle \theta' \rangle (\var{a}) = \theta \langle \theta'(\var{a}) \rangle$ for each variable $\var{a}\in\mathcal{V}$. 


For brevity of notation, we often use vector notation. Let
$\vec{u}=(u_1,\ldots,u_n)$ and $\vec{v}=(v_1,\ldots,v_n)$ be two vectors of
some symbolic expressions. We use $\vec{u}\le\vec{v}$ and
$\vec{u}<\vec{v}$ as abbreviations for the following formulae.
\begin{align*}
\vec{u}\le\vec{v} &~\equiv~ u_1\le v_1\wedge\ldots\wedge u_n\le v_n \\
\vec{u}<\vec{v} & 
~\equiv~\vec{u}\le\vec{v}\,\wedge\,\sum_{i = 1}^n{u_i}<\sum_{i = 1}^n{v_i}
\end{align*}


\section{Algorithm for Read-only Arrays}\label{sec:alg}

The idea of our algorithm is relatively simple. Given a program $P$, we
compute the set $B_P$ of all backbones of $P$, i.e.~all acyclic complete
paths. Then we compute an abstract path condition for each backbone. To
compute an abstract path condition for a backbone $\pi$, we perform a
standard symbolic execution of instructions along $\pi$ (i.e.~we gradually
construct a path condition $apc$ and we maintain a symbolic state $\theta$)
and whenever we visit an a entry node, we process the corresponding loop and then we add the resulting summary into the current path condition and symbolic state. 

Before we explain summary computation of loops, we need to define the
several terms. Let $v$ be an entry node of a loop $C$. An \emph{iteration}
is an arbitrary path of the form $v\alpha v$ such that $\alpha$ is a
(possibly empty) sequence of nodes in $C\smallsetminus\{v\}$. There is a
clear bijection between iterations and complete paths in the program
$P(C,v)$ induced by the loop. Hence, we do not distinguish between an
interation and the corresponding complete path. Let $\pi_1,\ldots,\pi_k$ be
all backbones in $P(C,v)$. We assosicate a fresh path counter $\kappa_i$ to
each backbone $\pi_i$. A \emph{looping path} is an arbitrary path over nodes
of $C$ leading from $v$ to $v$. Let $\beta$ be a looping path. Then $\beta$
can be written as $\beta=v\alpha_1v\alpha_2v\ldots v\alpha_nv$ where $n\ge
0$ and each $v\alpha_iv$ is an iteration. We define $\vec{\kappa}(\beta)$ as
vector $(c_1,c_2,\ldots,c_k)$, where each $c_j$ is the number of iterations
$v\alpha_iv$ in $\beta$ such that their backbone is~$\pi_j$.

To process a loop $C$ with entry node $v$ means to compute an iterated
symbolic state and looping condition for the loop. Let $\vec{\kappa}$ be a
vector of path counters firmly associated to backbones of the loop. On
intuitive level, \emph{Iterated symbolic state} $\theta^{\vec{\kappa}}$ is a
symbolic state that represents values of variables after arbitrary looping
path. The values are expressions that may contain path counters of
$\vec{\kappa}$. Further, \emph{looping condition} $\varphi^{\vec{\kappa}}$
is a formula generalizing all path conditions of all looping paths. The
formula $\varphi^{\vec{\kappa}}$ may contain path counters of
$\vec{\kappa}$. Formally, $\theta^{\vec{\kappa}}$ and
$\varphi^{\vec{\kappa}}$ have to satisfy the following condition: for each
path condition $pc$ and each symbolic memory $\theta$ produced a standard
symbolic execution along some looping path $\beta$, it holds that
\begin{itemize}
\item $pc\rightarrow\varphi^{\vec{\kappa}}[\vec{\kappa}/\vec{\kappa}(\beta)]$,
\item for each scalar variable $\var{a}$, either
  $\theta^{\vec{\kappa}}(\var{a})[\vec{\kappa}/\vec{\kappa}(\beta)]$
  contains $\star$, or $\theta(\var{a})=
  \theta^{\vec{\kappa}}(\var{a})[\vec{\kappa}/\vec{\kappa}(\beta)]$ is a
  valid formula.
\end{itemize}
Hence, an iterated symbolic state and a looping condition can be seen as an
abstraction (or an over-approximation) of all symbolic states and path conditions for all looping paths.

Now we return back to the symbolic execution of the backbone $\pi$. When we
have $\theta^{\vec{\kappa}}$ and $\varphi^{\vec{\kappa}}$, we update path
condition to $apc~\wedge~\theta\langle\varphi^{\vec{\kappa}}\rangle$, symbolic state to
$\theta\langle\theta^{\vec{\kappa}}\rangle$, and we continue with standard symbolic
execution along the backbone $\pi$. When the symbolic
execution of the backbone $\pi$ finishes, we set $apc$ to
$\exists\vec{\kappa}'(\vec{\kappa}'\ge\vec{0}~\wedge~apc)$, where
$\vec{\kappa}'$ is a vector of all path counters with free occurrences in
$apc$. The resulting formula $apc$ is a generalization of all (standard)
path conditions for all paths along the backbone $\pi$, as $pc\rightarrow apc$
holds for each such a standard path condition $pc$.

Let $apc_\pi$ be an abstract path condition for each backbone $\pi\in
B_P$. Then the necessary condition on input data to reach the target node
and thus the final result of our algorithm is the formula
$$\bigvee_{\pi\in B_P}apc_\pi.$$

\begin{algorithm}[!htb]
\LinesNotNumbered
\newcommand{\executeBB}{\texttt{executeBackbone}}
\caption{\executeBB\texttt{(}$\pi,P$\texttt{)}\label{alg:1}}
\KwIn{
\aargm{$\pi$}{a backbone of $P$}
\aargm{$P$}{a program}
}
\KwOut{
\aargm{$\theta$}{symbolic state}
\aargm{$apc$}{abstracted path condition}
}
\BlankLine 
\nl $\theta$ \aset $\theta_I$\; 
\nl $apc$ \aset $\true$\;
\nl Let $\pi$ has the form $v_0v_1\ldots v_n$\;
\nl \For{$i$\aset $1$ \KwTo $n$}{
\nl  \If{$\iota_P((v_{i-1},v_i))$ has the form $\texttt{assume}(\gamma)$}{
\nl     $apc$ \aset $apc\,\wedge\,\theta(\gamma)$\;
  }
\nl  \If{$\iota_P((v_{i-1},v_i))$ has the form $\var{v}\aset e$}{
\nl     $\theta$ \aset $\theta[\var{v}\rightarrow \theta(e)]$\;
  }
\nl  \If{$v_i$ is an entry node on $\pi$}{
\nl     Let $C$ be the loop at entry $v_i$ on the backbone $\pi$\;
\nl     Compute induced program $P(C,v_i)$\;
\nl     $(\theta^{\vec{\kappa}},\varphi^{\vec{\kappa}})$ \aset $\texttt{processLoop}(P(C,v_i))$\; \label{l:callProcessLoop} 
\nl     $apc$ \aset $apc\,\wedge\,\theta \langle \varphi^{\vec{\kappa}} \rangle$\;\label{alg:line-apc}
\nl     $\theta$ \aset $\theta\langle\theta^{\vec{\kappa}}\rangle$\;
  }
}
\nl $apc$ \aset $\exists\vec{\kappa}'(\vec{\kappa}'\ge\vec{0}~\wedge~apc)$\, where
$\vec{\kappa}'$ are all path counters\\\label{alg:line-apc2}
\hspace{26ex}~~with free occurences in $apc$\;
\nl \Return{$(\theta,apc)$}\;
\end{algorithm}

The computation of abstract path condition $apc$ for a given backbone $\pi$
is precisely formulated in Algorithm~\ref{alg:1}. On
Line~\ref{l:callProcessLoop}, the algorithm calls function
$\texttt{processLoop}(P(C,v_i))$ that returns an iterated symbolic state
$\theta^{\vec{\kappa}}$ and a looping condition $\varphi^{\vec{\kappa}}$ (i.e.~a summary) for
a loop $C$ at its entry node $v_i$ represented by an induced program $P(C,v_i)$. We assume that the path counters $\vec{\kappa}$
used in $\theta^{\vec{\kappa}}$ and $\varphi^{\vec{\kappa}}$ are fresh,
i.e.~they do not occur in current values of $\theta$ or $apc$.

In the rest of this section, we describe two versions of the
$\texttt{processLoop}$ procedure.

\subsection{Loop Processing: Lightweight Version}
\label{ssec:OverApproxLoop}

We are given a program $P'$ induced by a loop at some entry node. We compute the set of all
backbones $B_{P'}=\{\pi_1,\ldots,\pi_k\}$ and we run the function
\texttt{executeBackbone(}$\pi_i,P'$\texttt{)} on each backbone $\pi_i$. Let
$\theta_i$ and $apc_i$ be the returned symbolic state and abstract path
condition, respectively. Further, we assign a fresh path counter $\kappa_i$
to each backbone $\pi_i$. We set $\vec{\kappa}=(\kappa_1,\ldots,\kappa_k)$.

First, we compute an iterated symbolic state $\theta^{\vec{\kappa}}$. In
other words, for each scalar variable $\var{a}$ we construct a symbolic
expression over symbols and path counters of $\vec{\kappa}$ describing the
value of $\var{a}$ after arbitrary $\sum_{1\le m\le k}\kappa_i$ successive
executions of program $P'$ such that exactly $\kappa_i$ executions took
backbone $\pi_i$ for each $\pi_i\in B_{P'}$. In general, this is a very hard
task. To be on the safe side, we start with $\theta^{\vec{\kappa}}$ set to
$\theta_\star$ and we gradually improve its precision. More precisely, we
change the value of $\theta^{\vec{\kappa}}(\var{a})$ in one of the following
four cases:
\begin{enumerate}
\item For each backbone $\pi_i\in B_{P'}$, $\theta_i(\var{a})=\sym{a}$. In
  other words, the value of $\var{a}$ is not changed on any complete path in
  $P'$. This case is trivia. We set
  $\theta^{\vec{\kappa}}(\var{a})=\sym{a}$.
\item For each backbone $\pi_i\in B_{P'}$, either
  $\theta_i(\var{a})=\sym{a}$ or $\theta_i(\var{a})=\sym{a}+d_i$ for some
  symbolic expression $d_i$ such that $\theta^{\vec{\kappa}}\langle d_i \rangle$ contains
  neither $\star$ nor any path counters. Let us assume that the latter
  possibility holds for $\pi_1,\ldots,\pi_{k'}$ and the former one for
  $\pi_{k'+1},\ldots,\pi_k$. The condition on $\theta^{\vec{\kappa}}\langle d_i \rangle$
  guarantees that the value of $d_i$ is constant during all iterations over
  the loop. In this case, we set $\theta^{\vec{\kappa}}(\var{a})=
  \sym{a}+\sum_{1\le i\le k'}d_i\cdot\kappa_i$.
\item There exists a symbolic expression $d$ such that 
  $\theta^{\vec{\kappa}}\langle d \rangle$ contains neither $\star$ nor any path counters,
  and for each backbone $\pi_i\in B_{P'}$, either
  $\theta_i(\var{a})=\sym{a}$ or $\theta_i(\var{a})=d$. Let us assume that
  the latter possibility holds for $\pi_1,\ldots,\pi_{k'}$ and the former
  one for $\pi_{k'+1},\ldots,\pi_k$. In other words, the value of $\var{a}$
  is set to $d$ along each backbone $\pi_j$ for $1\le j\le
  k'$, while it is unchanged ony any other complete path. Hence, we set
  $\theta^{\vec{\kappa}}(\var{a})=\ite(\sum_{1\le j\le
    k'}\kappa_i>0,d,\sym{a})$.
\item For one backbone, say $\pi_i$, $\theta_i(\var{a})=d$ for some symbolic
  expression $d$ such that $\theta^{\vec{\kappa}}\langle d \rangle$ contains neither $\star$
  nor any path counters except $\kappa_i$. Further, for each backbone
  $\pi_j$ such that $i\neq j$, $\theta_j(\var{a})=\sym{a}$. That is, only
  the complete paths with backbone $\pi_i$ modify $\var{a}$ and they set it
  to a value independent on other path counters than $\kappa_i$. Note that
  if we assign $d$ to $\var{a}$ in the $\kappa_i$-th iteration along the
  complete paths with backbone $\pi_i$, then the actual assigned value of
  $d$ is the value after $\kappa_i-1$ iterations along the paths. Hence, we
  set $\theta^{\vec{\kappa}}(\var{a})=
  \ite(\kappa_i>0,\theta^{\vec{\kappa}}\langle d \rangle [\kappa_i/\kappa_i-1],\sym{a})$.
\end{enumerate}
Wa apply these rules repeatedly until no other precise value of
$\theta^{\vec{\kappa}}(\var{a})$ can be derived.

Computation of an looping condition $\varphi^{\vec{\kappa}}$ is
straightforward. The intuition has been already given in the introduction. We
set
\[
\setlength{\arraycolsep}{0pt}
\begin{array}{rl}
  \displaystyle
  \varphi^{\vec{\kappa}} \equiv \bigwedge_{i = 1}^k \forall \tau_i~ & \big(\,0
  \leq \tau_i < \kappa_i\,\rightarrow\\[-1ex]
  & ~~\rightarrow~ \exists \vec{\tau}_i~(\vec{0} \leq
  \vec{\tau}_i \leq \vec{\kappa}_i\,\wedge\,
  \theta^{\vec{\kappa}}\langle apc_i \rangle[\vec{\kappa}/\vec{\tau}])\big),
\end{array}
\]
where
\begin{align*}
  \vec{\tau}_i &= (\tau_1, \ldots, \tau_{i-1}, \tau_{i+1}, \ldots, \tau_k),\\
  \vec{\kappa}_i &= (\kappa_1, \ldots, \kappa_{i-1}, \kappa_{i+1}, \ldots,
  \kappa_k).
\end{align*}

Let us note that the program $P'$ induced by a loop can again contain
loops. Hence, a symbolic state $\theta_i$ and an abstract path condition
$apc_i$ for a backbone $\pi_i$ can contain path counters $\vec{\kappa}'$
corresponding to some loops inside $P'$. As the number of iterations of the
inner loop can be different in each iteration of the outer loop, the meaning
of $\vec{\kappa}'$ is different in each iteration of $P'$. Our construction
of $\theta^{\vec{\kappa}}$ handles this situation correctly: if some
$\theta_i(\var{a})$ contains a path counter of $\vec{\kappa}'$, then
$\theta^{\vec{\kappa}}(\var{a})=\star$. Further, the path counters of
$\vec{\kappa}'$ may occur in the looping condition $\varphi^{\vec{\kappa}}$,
but all their occurences are in subformulae of the form $apc_i$ talking
about a single iteration of $P'$, and they are bound there by an existential
quantifier.

\subsection{Improving Precision of Lightweight Version}

The loop processing procedure described in the previous subsection is
correct, but not very precise when program $P'$ contains nested loops. We
illustarte it on the following program.

\begin{minipage}{\textwidth}
\begin{verbatim}
for (i = 0; i < m; ++i) {
  j = i;
  while (j < n) {
    ++j;
  }
}
\end{verbatim}
\end{minipage}

\bigskip

The corresponding flowgraph is depicted in Figure~\ref{fig:ex2} (upper). The
program contains one backbone $\mathit{l_sal_t}$ with entry node $a$ and the
corresponding loop $C=\{a,b,c,d,e\}$. The induced program $P'=P(C,a)$
contains again one backbone $\mathit{abcea'}$ with entry node $c$ and the
corresponding loop $C'=\{c,d\}$. The induced programs $P'$ and $P''=P(C',c)$
are depicted in Figure~\ref{fig:ex2} (lower left and lower right respectively).

\begin{figure}[!htb]
  \centering
  \tikzstyle{start} = 
    [regular polygon,regular polygon sides=3,thick,draw,inner sep=1pt]
  \tikzstyle{loc} = [circle,thick,draw]
  \tikzstyle{pre} = [<-,shorten <=1pt,>=stealth',semithick]
  \tikzstyle{post} = [->,shorten <=1pt,>=stealth',semithick]
  \footnotesize
  \begin{tikzpicture}[node distance=1.4cm]
    \node [start,fill=black!10,regular polygon rotate=180] (ls) {$l_s$};
    \node [loc] (a) [below of=ls] {$a$}
      edge [pre] node [label=right:\texttt{i = 0}] {} (ls);
    \node [loc] (b) [below of=a] {$b$}
      edge [pre] node [label=right:\texttt{i < m}] {} (a);
    \node [loc] (c) [below of=b] {$c$}
      edge [pre] node [label=right:\texttt{j = i}] {} (b);
    \node [loc] (d) [below of=c] {$d$}
      edge [pre] node [label=left:\texttt{j < n}] {} (c)
      edge [post,bend right=90,looseness=2] node [label=right:\texttt{++j}] {} (c);
    \node [loc] (e) [left of=b,xshift=-2mm] {$e$}
      edge [pre,bend right=40] node [label=left:\texttt{j >= n~~}] {} (c)
      edge [post,bend left=40] node [label=left:\texttt{++i~~}] {} (a);
    \node [start,fill=black!10] (lt) [right of=b,xshift=2mm] {$l_t$}
      edge [pre,bend right=40] node [label=right:\texttt{~i >= m}] {} (a);

    \node [start,fill=black!10,regular polygon rotate=180,inner sep=2pt]
      (a1) [below of=d,xshift=-10mm] {$a$};
    \node [loc] (b1) [below of=a1] {$b$}
      edge [pre] node [label=right:\texttt{i < m}] {} (a1);
    \node [loc] (c1) [below of=b1] {$c$}
      edge [pre] node [label=right:\texttt{j = i}] {} (b1);
    \node [loc] (d1) [below of=c1] {$d$}
      edge [pre] node [label=left:\texttt{j < n}] {} (c1)
      edge [post,bend right=90,looseness=2] node [label=right:\texttt{++j}] {} (c1);
    \node [loc] (e1) [left of=c1,xshift=-5mm] {$e$}
      edge [pre] node [label=above:\texttt{j >= n}] {} (c1);
    \node [start,fill=black!10] (aa1) [below of=e1] {$a'$}
      edge [pre] node [label=left:\texttt{++i}] {} (e1);

    \node [start,fill=black!10,regular polygon rotate=180,inner sep=2pt]
      (c2) [right of=b1,xshift=20mm] {$c$};
    \node [loc] (d2) [below of=c2] {$d$}
      edge [pre] node [label=right:\texttt{j < n}] {} (c2);
    \node [start,fill=black!10] (cc2) [below of=d2] {$c'$}
      edge [pre] node [label=right:\texttt{++j}] {} (d2);
  \end{tikzpicture}
  \caption{Example of nested path counter dependency (upper). Program $P'$
    induced by loop $C=\{a,b,c,d,e\}$ with entry node $a$ (lower left).
    Program $P''$ induced by loop $C'=\{c,d\}$ with entry node $c$ (lower
    right).}
  \label{fig:ex2}
\end{figure}
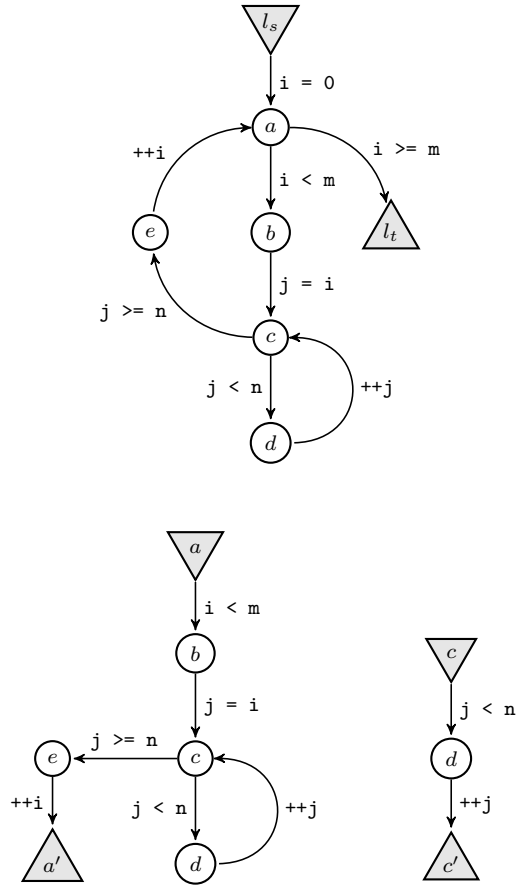

We can easily compute the iterated symbolic state $\theta^{\kappa'}$ and
looping condition $\varphi^{\kappa'}$ for $P''$:
\begin{align*}
  \theta^{\kappa'}(\var{j})&=\sym{j}+\kappa'
  \qquad \varphi^{\kappa'}\equiv\forall~\tau'(0\le\tau'<\kappa'~\rightarrow~\sym{j}+\tau'<\sym{n})\\
  \theta^{\kappa'}(\var{n})&=\sym{n}
\end{align*}
With this information, one can compute symbolic state $\theta'$ for backbone
$\mathit{abcea'}$ of $P'$. As there is only one backbone in $P'$, the
iterated symbolic state $\theta^\kappa$ can be computed directly from
$\theta'$.
\[
\setlength{\arraycolsep}{2pt}
\begin{array}{rclp{6ex}rcl}
  \theta'(\var{i})&=&\sym{i}+1           && \theta^\kappa(\var{i})&=&\sym{i}+\kappa\\
  \theta'(\var{j})&=&\sym{i}+\kappa'     && \theta^\kappa(\var{j})&=&\star\\
  \theta'(\var{m})&=&\sym{m}             && \theta^\kappa(\var{m})&=&\sym{m}\\
  \theta'(\var{n})&=&\sym{n}             && \theta^\kappa(\var{n})&=&\sym{n}
\end{array}
\]
In fact, the value of $\var{j}$ after one iteration of $P'$ can be expressed
without $\kappa'$ as $\theta'(\var{j})=\max(\sym{n},\sym{i})$. If we modify
$\theta'$ in this way, the algorithm presented in the previous section
computes more precise iterated symbolic state $\theta^\kappa$, namely it
returns
$$\theta^\kappa(\var{j})=\ite(\kappa>0,\max(\sym{n},\sym{i}+\kappa-1),\sym{j}).$$

The crucial step towards higher precision of iterated symbolic state is
detection of dependencies between path counters of an outer loop and path
counters of its nested loops. In the example, we would like to detect the
fact that in $(\kappa+1)$-st iteration of $P'$, the nested loop is iterated
$\kappa'=\max(0,\sym{n}-(\sym{i}+\kappa))$ times. In the next section we show how we detect these dependencies between path counters.

\subsection{Loop Processing: Heavyweight Version}

Intuitively, our heavyweight loop processing algorithm is looking for
\emph{linear dependency} of the \emph{sum of all path counters} of a nested
loop on path counters of the outer loop and on scalar program variables. We
are asking an SMT solver to infer dependecies from adjusted abstract path
conditions. If such a depenency is found, it is used to eliminate path
counters of nested loop in computation of iterated symbolic state
corresponding to the outer loop.


\begin{algorithm}[!htb]
\caption{\texttt{processLoop(}$P'$\texttt{)}\label{alg:2}}
\KwIn{
\aargm{$P'$}{an induced program of a loop at an entry node}
}
\KwOut{
\aargm{$\theta^{\vec{\kappa}}$}{iterated symbolic state}
\aargm{$\varphi^{\vec{\kappa}}$}{looping condition}
}
\BlankLine 
$\mathcal{V}_s$ \aset $\emptyset$\;
Compute backbones $B_{P'}=\{\pi_1,\pi_2,\ldots,\pi_k\}$\;
Let $\vec{\kappa}=(\kappa_1,\kappa_2,\ldots,\kappa_k)$ be fresh path counters\;
\ForEach{$\pi_i\in B_{P'}$}{
  $(\theta_i,apc_i)$ \aset \texttt{executeBackbone(}$\pi_i,P'$\textrm{)}\; 
  \ForEach{entry node $v_j$ on $\pi_i$}{
    Let $\vec{\kappa}_j$ be path counters of the loop entered by $v_j$\; 
    Add a fresh variable $\var{s}_{i,j}$ to $\mathcal{V}_s$\;
    Replace $\sum\vec{\kappa}_j$ in $\theta_i$ by $\sym{s}_{i,j}$
    and remove remaining $\vec{\kappa}_j$\;
    Construct a weakened looping condition $\sigma_j$ on $\sym{s}_{i,j}$\;
  }
}
Let $E$ be a set of all loop entry nodes along backbones in $B_{P'}$\;
\ForEach{entry node $v_{i,j} \in E$}{
  Construct a exit condition $\delta_{i,j}$ form formulae $apc_i$\;
}
Extend $\theta^{\vec{\kappa}}$ to $\mathcal{V}\cup\mathcal{V}_s$\;
$\theta^{\vec{\kappa}}$ \aset $\theta_\star$\;  
\Repeat{$\var{change} = \false$}{
  $\var{change} \aset \false$\; 
  \ForEach{$\var{a} \in \mathcal{V}$}{ \label{l:monotFirstLoop}
    $e$ \aset infer $\theta^{\vec{\kappa}}(\var{a})$ from $\theta_1,\theta_2,\ldots,\theta_k$ and $\theta^{\vec{\kappa}}$\;\label{alg:inf1}
    \If{$\theta^{\vec{\kappa}}(\var{a})=\star~\wedge~e\ne\star$}{
      $\theta^{\vec{\kappa}}(\var{a}) \aset e$\;
      $\var{change} \aset \true$\;
    }
  }
  \ForEach{$\var{s}_{i,j} \in \mathcal{V}_s$}{ \label{l:monotSecondLoop}
    $e$ \aset infer $\var{s}_{i,j}$ from $\sigma_{i,j} \wedge \delta_{i,j}$ and $\theta^{\vec{\kappa}}$\;\label{alg:inf2}
    \If{$\theta^{\vec{\kappa}}(\var{s}_{i,j})=\star~\wedge~e\ne\star$}{
      $\theta^{\vec{\kappa}}(\var{s}_{i,j}) \aset e$\;
      $\var{change} \aset \true$\; \label{l:monotoneFnEnd}
    }
  }
}
$\theta^{\vec{\kappa}} \aset \theta^{\vec{\kappa}}|_{\mathcal{V}}$\;
Construct $\varphi^{\vec{\kappa}}$ from $apc_1,\ldots,apc_k$ and $\theta^{\vec{\kappa}}$\;
\Return{$(\theta^{\vec{\kappa}},\varphi^{\vec{\kappa}}$)}\;
\end{algorithm}

The heavyweight loop processing procedure is given in
Algorithm~\ref{alg:2}. The algorithm works with the set of artificial
program variables $\mathcal{V}_s$, which is empty at the beginning. The
algorithm starts similarly as the lightweight one: it computes all backbones
of $P'$ and for each backbone $\pi_i$ is computes the corresponding symbolic
state $\theta_i$ and abstract path condition $apc_i$. Further, for each loop
on the backbone $\pi_i$ it performs the following three steps, where
$\vec{\kappa}_j$ are path counters of the loop.
\begin{itemize}
\item A fresh artificial variable $\var{s}_{i,j}$ is added to
  $\mathcal{V}_s$. In each iteration with backbone $\pi_i$, the variable
  $\var{s}_{i,j}$ represents the sum of path counters $\vec{\kappa}_j$.
\item We change $\theta_i$ to
  $\theta_i[\sum \vec{\kappa}_j/\sym{s}_{i,j}][\vec{\kappa}_j/\vec{\star}]$.
  Hence, we replace each sum of all path counters $\vec{\kappa}_j$ by
  $\sym{s}_{i,j}$ and we replace all other occurences of path counters
  $\vec{\kappa}_j$ by $\star$.
\item Since we are only interested in the sum $\sum\vec{\kappa}_j$, we can weaken the looping condition of the inner loop. We denote this weakened formula as $\sigma_{i,j}$ and compute it as follows. Let $apc'_1, \ldots,$ $apc'_{k'}$ be abstracted path conditions for
  backbones of the inner loop and let $\theta^{\vec{\kappa}_j}$ be the
  iterated symbolic state computed for the inner loop. Then we set
  \begin{align*}
    \sigma_{i,j}\,\equiv~&0\le\sym{s}_{i,j}{-}1~\rightarrow\\ & \theta^{\vec{\kappa}_j} \langle apc'_1 \vee \cdots \vee apc'_{k'} \rangle [\sum\vec{\kappa}_j/\sym{s}_{i,j}{-}1][\vec{\kappa}_j/\vec{\star}].
  \end{align*}
\end{itemize}
Next we compute exit conditions $\delta_{i,j}$ from inner loops at loop entries $v_{i,j}$. For each exit node $x$ from an inner loop at entry node $v_{i,j}$, there is a backbone $\pi_l$, $l \leq k$, of the form $\pi_l = \ldots v_{i,j} \alpha x \ldots$. To leave the loop, all the conditions along the path $\alpha x$ must by satisfiable. Therefore, if we denote by $apc_{v_{i,j},x}(\alpha x)$ conjunction of these conditions and $x_1, \ldots, x_r$ are all exits from the inner loop at entry $v_{i,j}$, then we can express $\delta_{i,j}$ as a formula:
\begin{align*}
\delta_{i,j} \equiv (apc_{v_{i,j},x_1}&(\alpha_1 x_1) \vee \cdots \vee apc_{v_{i,j},x_r}(\alpha_r x_r))\\&[\sum\vec{\kappa}_j/\sym{s}_{i,j}{-}1][\vec{\kappa}_j/\vec{\star}]
\end{align*}

In the second half of the algorithm, we extend $\theta^{\vec{\kappa}}$ to
the artificial program variables and we compute iterated symbolic state
$\theta^{\vec{\kappa}}$. We alternately try to infer more precise
information for standard and artificial program variables. The inference on
line~\ref{alg:inf1} employs the four conditions formulated in
Subsection~\ref{ssec:OverApproxLoop}, while the inference on
line~\ref{alg:inf2} executes an SMT solver. The solver decides whether there
exists a linear expression over path counters $\vec{\kappa}$ and constant
symbols corresponding to scalar program variables equivalent to
$\sym{s}_{i,j}$ for each $\sym{s}_{i,j}$ that satisfies the necessary
condition given by $\sigma_{i,j} \wedge \delta_{i,j}$. Hence, an SMT solver asked for
satisfieability of the formula
\begin{align*}
  \forall\,\vec{\sym{a}},\vec{\kappa},\sym{s}_\gamma~\Big(& (\vec{\kappa} \ge
  \vec{0} \wedge \sym{s}_\gamma \ge
  0 \wedge \theta^{\vec{\kappa}} \langle \sigma_{i,j} \wedge \delta_{i,j} \rangle) ~\rightarrow\\
  & ~\rightarrow~~ \sym{s}_\gamma=\max\{0,(\vec{\kappa}\cdot
  M+\vec{w})\cdot(\vec{\sym{a}},1)^T\}\Big)\textrm{,}
\end{align*}
where $M$ is a matrix and $\vec{w}$ is a vector of constant symbols and of
an appropriate sizes, $\vec{\sym{a}}$ is a vector of variables containing a
variable $\sym{a}$ for each scalar program variable $\var{a}$, and
$(\var{a},1)$ is the same vector prolonged with constant 1. Note that it is
possible to formulate stronger necessary conditions on $\sym{s}_{i,j}$ than
$\rho_{i,j}$. A stronger condition can lead to more discovered
dependencies. One can also look for more complex dependencies (for example
dependencies involving arrays). We chose a simple and relatively weak
conditions $\rho_{i,j}$ to get quick reactions of an SMT solver.

At the end of algorithm, we restrict the iterated symbolic state obtained
$\theta^{\vec{\kappa}}$ back to standard program variables. Finally, we
compute a looping condition in the way described in
Subsection~\ref{ssec:OverApproxLoop}.


\section{Extension for Array-manipulating Programs}\label{sec:alg2}

This section sketches necessary steps to extension of our algorithm to
programs that modify arrays. 

First, we extend our instruction set with an assignment instruction of the
form $\texttt{A[}e_1,e_2,\ldots,e_n\texttt{]}\aset e$, where
$e,e_1,e_2,\ldots,e_n$ are program expressions of integer type and $n\ge 1$
is the arity of $\texttt{A}$. Further, we have to define symbolic
expressions of types $\texttt{Int}^k\rightarrow\texttt{Int}$ for every arity
$k$. For expressions of suhc a type, we use the notation
$\lambda\chi_1\chi_2\ldots\chi_k.e$ or
$\lambda\chi_1\chi_2\ldots\chi_k.e(\chi_1,\chi_2,\ldots,\chi_k)$ if we want
to emphasize that $e$ is a function symbol of type
$\texttt{Int}^k\rightarrow\texttt{Int}$. We often use vector notation
$\lambda\vec{\chi}.e(\vec{\chi})$ instead of
$\lambda\chi_1\chi_2\ldots\chi_k.e(\chi_1,chi_2,\ldots,\chi_k)$, where $k$
is determined by a context. A symbolic execution has to be extended as well
in order to handle assignment instructions modifying an array.

The most interesting of the extension for full array support are the rules
that allow us to compute values of array variables in and iterated symbolic
state. We mention two rules we have designed for arrays.

Assume that we are given a program $P'$ with backbones $\pi_1,\ldots,\pi_k$
and symbolic state $\theta_i$ and abstract path condition $apc_i$ for each
$\pi_i$. The iterated symbolic value of an array variable $\var{A}$ can be
precisely computed if some of the following two cases happen.
\begin{itemize}
\item If $\theta_i(\var{A})=\lambda\vec{\chi}.\sym{A}(\vec{\chi})$ for
  all backbones $\pi_i$, then the array is not changed along any complete
  path in $P'$. Hence we set
  $\theta^{\vec{\kappa}}(\var{A})=\lambda\vec{\chi}.\sym{A}(\vec{\chi})$.
\item if there exists one backbone, say $\pi_1$ such that
  \begin{align*}
    \theta_1(\var{A})=
    \lambda\vec{\chi}.\ite(&\varphi_1,t_1(\vec{\chi}),\ite(\varphi_2,t_2(\vec{\chi}),
    \ldots\\
    &\ldots,\ite(\varphi_n,t_n(\vec{\chi}),\sym{A}(\vec{\chi}))\ldots))
  \end{align*}
  and $\theta_i(\var{A})=\lambda\vec{\chi}.\sym{A}(\vec{\chi})$ for all
  other backbones $\pi_1$, then we set 
  \begin{align*}
    \theta^{\vec{\kappa}}(\var{A})=
    \lambda\vec{\chi}.&\ite(\kappa_1=0,\sym{A}(\vec{\chi}),\ite(\varphi_1,t_1(\vec{\chi}),\\
    & \ite(\varphi_2,t_2(\vec{\chi}),\ldots,\ite(\varphi_n,t_n(\vec{\chi}),\sym{A}(\vec{\chi}))\ldots))).
  \end{align*}
\end{itemize}

Other rules are written in similar style. 


\section{Soundness and Incompleteness}

In this section we formulate and prove soundness and incompleteness theorems for our algorithm.

\begin{thm}[Soundness]
Let $\mathit{apc}$ be the necessary condition computed by our algorithm for a given target program location. If $\mathit{apc}$ is not satisfiable, then the target location is not reachable in that program.
\end{thm}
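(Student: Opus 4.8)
The plan is to prove the contrapositive form, namely that $\hat{\varphi}$ is a genuine \emph{necessary} condition: every concrete input that drives the program to $l_t$ satisfies $\hat{\varphi}$. So fix such an input and interpret each symbol $\sym{a}$ by the initial value of $\var{a}$. The execution follows a unique complete path $\pi$, and $\pi$ has some backbone $\pi_i$. It therefore suffices to exhibit non-negative integer values $\vec{c}$ for the path counters $\vec{\kappa}_i$ under which $\varphi_i$ becomes true, since this witnesses the $i$-th disjunct of $\hat{\varphi}$.

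The heart of the argument is a single lemma, proved by induction on the loop-nesting depth (equivalently, on the recursion depth of \texttt{executeBackbones}). The lemma states that for any complete path $\pi$ with backbone $\pi_i$ in the current flowgraph, the concrete run along $\pi$ supplies non-negative counter values $\vec{c}$ such that (i) $\theta_i$ evaluated at $\vec{c}$ equals the true final value of every non-$\star$ variable, and (ii) $\varphi_i$ holds at $\vec{c}$. The witness $c_j$ is simply the number of iterations that the concrete run performs along the $j$-th inner backbone $\pi_j'$. The base case is a loop-free backbone, where the modified symbolic execution coincides with ordinary symbolic execution, so $\varphi_i$ is the exact path condition of $\pi$ and (i)--(ii) are immediate.

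For the inductive step I would follow Algorithm~\ref{alg:execBB} edge by edge, the only nontrivial edges being loop entries. Here two facts must be checked. First, the iterated state $\theta^{\vec{\kappa}}$ is sound: by the definition of the induced flowgraph $P(C,v)$, each iteration of the loop corresponds to a complete path of $P(C,v)$, so the inner induction hypothesis applies; a short case analysis on line~\ref{alg:compSum:improveVar} then shows that each non-$\star$ entry of $\theta^{\vec{\kappa}}$, evaluated at the iteration counts, gives the true variable value. Crucially, the handled cases use only increments $d_i$ that $\theta^{\vec{\kappa}}$ makes constant across iterations, so the result is independent of the order in which the inner backbones are interleaved, which is exactly the property that makes the additive summary correct. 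Second, the looping condition $\varphi^{\vec{\kappa}}$ holds at the iteration counts: for each subformula $\psi_i'$ and each of the $\kappa_i$ concrete iterations along $\pi_i'$ (indexed by $\tau_i$), the witnesses for the remaining quantified counters $\vec{\tau}_i$ are the numbers of other inner backbones already executed before that iteration, and the witnesses for the nested counters $\vec{\kappa}_i'$ come from the inner induction hypothesis; with these choices $\gamma_i'$ reduces to the conjunction of tests actually taken in that iteration, which held by assumption. Dropping predicates that contain $\star$ (and dropping $\star$-tests during symbolic execution) only weakens the formula, preserving necessity.

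The step I expect to be the main obstacle is the soundness of the looping condition together with the iterated state in the presence of nested loops. One must align three different quantities, the concrete iteration counts used as counter witnesses, the ``partial'' counts used as witnesses for the inner existentials $\vec{\tau}_i$, and the nested counters $\vec{\kappa}_i'$ handed over by the inner induction hypothesis, while respecting that the iterated state is only order-insensitive on its non-$\star$ component. Making the bookkeeping of these correspondences precise, and confirming that the $\forall\tau_i\,\exists\vec{\tau}_i$ structure is genuinely a weakening of what the concrete run guarantees, is the technical core; once it is in place, the disjunction over backbones and the outer existential closure over $\vec{\kappa}_i$ in $\hat{\varphi}$ follow directly.
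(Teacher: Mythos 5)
Your plan is sound and follows essentially the argument the paper intends: the paper prints no in-text proof of this theorem (it defers an ``intuitive proof'' to its technical report), and its Section~2 discussion justifies necessity exactly along your lines --- each abstract path condition is implied by every concrete path condition sharing that backbone, with the concrete per-backbone iteration counts witnessing the path counters, the prior-iteration counts witnessing the inner $\exists\vec{\tau}_i$, the nested counts witnessing $\vec{\kappa}_i'$, and the dropping of $\star$-predicates together with the $\forall\tau_i\,\exists\vec{\tau}_i$ structure being deliberate weakenings that preserve necessity. Your outline is consistent with the algorithm's definitions (in particular, the constancy condition $\theta^{\vec{\kappa}}\langle d_i\rangle$ free of $\star$ and path counters is what makes the additive summary order-insensitive, as you note), and the step you flag as the technical core --- the bookkeeping aligning the three kinds of counter values across nested loops --- is precisely the detail the paper itself leaves to the report rather than a flaw in your approach.
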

\begin{proof}[Informal proof]
We build any looping condition $\varphi^{\vec{\kappa}}$ such that it is implied by all path conditions of an analysed loop. And each formula $\mathit{apc}_\pi$ constructed in Algorithm~\ref{alg:1} collects all the predicated along passed backbone $\pi$ and it also collects looping conditions at loop entry nodes along the backbone. Therefore, $\mathit{apc}_\pi$ must be implied by any path condition of any symbolic execution along $\pi$. We compute final $\mathit{apc}$ as a disjunction of formulae $\mathit{apc}_\pi$ for all backbones. Since any program path leading to the target location must follow some backbone (with possible temporary escapes into loops along the backbone), its path condition exists (i.e.~it is satisfiable formula) only if $\mathit{apc}$ is satisfiable.
\end{proof}

\begin{thm}[Incompleteness]
There is a program and an unreachable target location in it for which the formula $\mathit{apc}$ computed by our algorithm is satisfiable.
\end{thm}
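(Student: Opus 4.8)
The statement is an existence claim, so the plan is to \emph{exhibit} one concrete program together with a target location that is genuinely unreachable, and then verify that the formula $\hat{\varphi}$ produced by the algorithm is nonetheless satisfiable. Section~\ref{sec:overview} already isolates the two reasons why $\hat{\varphi}$ need not be sufficient: a loop whose effect on a variable cannot be expressed assigns that variable the value $\star$, after which every test over it is silently dropped; and the looping condition checks each iteration against \emph{independently} chosen path-counter values without enforcing their global consistency. Either gap can be turned into a witness, and I would base the cleanest witness on the first one, since it lets $\hat{\varphi}$ collapse to $\true$.

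Concretely, I would take the program that first sets \texttt{x:=1} and then repeatedly executes \texttt{x:=2} while $\var{x}>0$, with the target location placed immediately after the loop. As a flowgraph it has a start node, a loop entry $c$ with branch $\var{x}>0$ into the body \texttt{x:=2} (back to $c$) and branch $\var{x}\le 0$ into the target $l_t$. Two facts must be established: (i) $l_t$ is unreachable, and (ii) $\hat{\varphi}\equiv\true$. Claim~(i) is immediate from the concrete semantics: after \texttt{x:=1} the guard $\var{x}>0$ holds, the body resets \var{x} to $2$, and since $2>0$ the loop never exits, so the only graph-path reaching $l_t$ --- the one taking the exit branch after zero iterations --- is semantically infeasible.

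For (ii) I would trace the algorithm. The single backbone is $l_s\,c\,l_t$ with loop entry $c$, whose induced flowgraph $P(C,c)$ has the one backbone with state $\theta_1'(\var{x})=2$ and condition $\varphi_1'\equiv(\sym{x}>0)$. In \texttt{computeSummary}, since $2$ is neither $\sym{x}$ (case~1) nor of the form $\sym{x}+d$ with $\theta^{\vec{\kappa}}\langle d\rangle$ free of $\star$ (case~2, as $d=2-\sym{x}$ evaluates to $2-\star$), we get $\theta^{\vec{\kappa}}(\var{x})=\star$; consequently $\theta^{\vec{\kappa}}\langle\varphi_1'\rangle$ is $\star>0$, which is removed, leaving $\varphi^{\vec{\kappa}}\equiv\true$. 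Executing the backbone, the summary overwrites the state at $c$ with $\theta_i(\var{x})=\star$, so the exit test $\theta_i(\var{x})\le 0$ contains $\star$ and is not added; hence $\varphi_i\equiv\true$ and $\hat{\varphi}\equiv\exists\kappa_1(\kappa_1\ge 0\wedge\true)\equiv\true$, which is satisfiable. The main obstacle is not any single step but the \emph{matching} between the informal description of Section~\ref{sec:overview} and the formal pseudocode of Algorithms~\ref{alg:execBB} and~\ref{alg:compSum}: one must check carefully that the summary is indeed applied at $c$ on the zero-iteration backbone (so the decisive exit test really is discarded) and that neither case of the improvement step accidentally recovers a precise value for \var{x}. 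A more delicate witness, showing that incompleteness persists even when no variable ever becomes $\star$, would instead exploit the second gap: a two-path loop in which each path's test is satisfiable for some admissible interleaving count $\tau$ but no single monotone interleaving satisfies all of them simultaneously; verifying the absence of any consistent run there is the harder task.
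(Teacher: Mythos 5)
Your proposal is correct and takes essentially the same approach as the paper: the paper's own proof exhibits the non-terminating loop \texttt{int i = 1; while (i < 3) \{ if (i == 2) i = 1; else i = 2; \}}, where the constant reassignments match neither case of the improvement step, so the variable becomes $\star$, the exit test is dropped, and $\hat{\varphi}\equiv\true$ is satisfiable despite the target being unreachable. Your trace through Algorithms~\ref{alg:execBB} and~\ref{alg:compSum} is more detailed than the paper's one-line justification, and your only substantive difference is cosmetic --- your witness \texttt{x:=1; while (x>0) x:=2;} is a slightly simpler instance of the same idea (though note the paper's alternating $1/2$ example is more robust, since a hypothetical extra case handling constant assignments would recover a precise value for your \var{x} but not for the paper's \var{i}).
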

\begin{proof}
Let us consider the following C code:
\begin{verbatim}
   int i = 1;
   while (i < 3) {
      if (i == 2)
         i = 1;
      else
         i = 2;
   }
\end{verbatim}
The loop never terminates. Therefore, a program location below it is not reachable. But $\mathit{apc}$ computed for that location is equal to $\true$, since variable $\var{i}$ does not follow a monotone progression.
\end{proof}


\section{Dealing with Quantifiers}\label{sec:quantifiers}

We can ask an SMT solver whether a computed necessary condition $\mathit{apc}$ is satisfiable or not. And if it is, we may further ask for some its model. As we will see in Section~\ref{sec:apps} such queries to a solver should be fast. Unfortunately, our experience with solvers shows that presence of quantifiers in $\mathit{apc}$ usually causes performance issues. Although SMT technology evolves quickly, we show in this section how to overcome this issue now by unfolding universally quantified formulae the looping conditions $\varphi^{\vec{\kappa}}$ are made of.

Universally quantified variables $\tau_i$ in formulae $\varphi^{\vec{\kappa}}$ are always restricted from above by path counters $\kappa_i$ counting iterations of backbones $\pi_i'$ of analysed loop. Let us choose some upper limits $K_i > 0$ for the path counters $\kappa_i$. Since each $\tau_i$ ranges over a finite set of integers $\{ 0, \ldots, K_i - 1 \}$ now, we can unfold each universally quantified formula in $\varphi^{\vec{\kappa}}$ for each possible value of $\tau_i$. Having eliminated the universal quantification, we can also eliminate existential quantification of all $\kappa_i$ and all $\vec{\tau_i}$ in $\varphi^{\vec{\kappa}}$ and whole $\mathit{apc}$ by redefining them as uninterpreted integer constants. Let us see an unfolded necessary condition $\mathit{apc}$, denoted by $\mathit{apc}^{\vec{K}}$, of our running example, when we choose upper limits $\vec{K} = (K_1, K_2)$ for the path counters $\vec{\kappa} = (\kappa_1, \kappa_2)$:
\begin{tabbing}
~
$\mathit{apc}^{\vec{K}} \equiv~$ \= $0 \leq \kappa_1 \wedge 0 \leq \kappa_2~\wedge$ \\
\> $\bigwedge_{i = 0}^{K_1} (0 \leq i < \kappa_1 \rightarrow (0 \leq \tau_{2,i} \leq \kappa_2~\wedge$ \\
\>\hspace{1.5cm} $i + \tau_{2,i} < \sym{n} \wedge \sym{A}(i + \tau_{2,i}) = 1))~\wedge$ \\
\> $\bigwedge_{i = 0}^{K_2} (0 \leq i < \kappa_2 \rightarrow (0 \leq \tau_{1,i} \leq \kappa_1~\wedge$\\
\>\hspace{1.5cm} $\tau_{1,i} + i < \sym{n} \wedge \sym{A}(\tau_{1,i} + i) \neq 1))~\wedge$ \\
\> $\kappa_1 + \kappa_2 \geq \sym{n}~\wedge~\kappa_1 + 3 > 12$,
\end{tabbing}
where $\kappa_1, \kappa_2, \tau_{1,0}, \ldots, \tau_{1,K_2}, \tau_{2,0}, \ldots, \tau_{2,K_1}$ are uninterpreted integer constants.

For any $\vec{K}$ the formula $\mathit{apc}^{\vec{K}}$ represents wakened $\mathit{apc}$. Higher values we choose, then we get closer to the precision of $\mathit{apc}$. In practice we must choose moderate values $\vec{K}$, since the unfolding process makes $\mathit{apc}^{\vec{K}}$ much longer then $\mathit{apc}$.


In some cases an SMT solver is able to quickly decide satisfiability of $\mathit{apc}$. Therefore, we ask the solver for satisfiability of $\mathit{apc}$ in parallel with the unfolding procedure described above. And there is a common timeout for both queries. We take the fastest answer. In case both queries exceeds the timeout, the condition $\mathit{apc}$ cannot help a tool to cover given target location.


\section{Integration into Tools}\label{sec:apps}

Tools based on symbolic execution typically explore program paths iteratively. At each iteration there is a set of program locations $\{ v_1, \ldots, v_k \}$, from which the symbolic execution may continue further. At the beginning the set contains only program entry location. In each iteration of the symbolic execution the set is updated such that actions of program edges going out from \emph{some} locations $v_i$ are symbolically executed. Different tools use different systematic and heuristic strategies for selecting locations $v_i$ to be processed in the current iteration. It is also important to note that for each $v_i$ there is available an actual path condition $\varphi_i$ capturing already taken symbolic execution from the entry location up to $v_i$.

When a tool detects difficulties in some iteration to cover a particular program location, then using $\mathit{apc}$ it can restrict selection from the whole set $\{ v_1, \ldots,$ $v_k \}$ to only those locations $v_i$, for which a formula $\varphi_i \wedge \mathit{apc}$ is satisfiable. In other words, if for some $v_i$ the formula $\varphi_i \wedge \mathit{apc}$ is \emph{not} satisfiable, then we are guaranteed there is no real path from $v_i$ to the target location. And therefore, $v_i$ can safely be removed from the consideration.

Tools like \Sage, \Pex or \Cute combine symbolic execution with concrete one. Let us assume that a location $v_i$, for which the formula $\varphi_i \wedge \mathit{apc}$ is satisfiable, was selected in a current iteration. These tools require a concrete input to the program to proceed further from $v_i$. Such an input can directly be extracted from any model of the formula $\varphi_i \wedge \mathit{apc}$.


\section{Experimental Results}\label{sec:experiments}

We implemented the algorithm in an experimental program, which we call \APC. We also prepared a small set of benchmark programs mostly taken from other papers. In each benchmark we marked a single location as the target one. All the benchmarks have a huge number of paths, so it is difficult to reach the target. We run \Pex and \APC on the benchmarks and we measured times till the target locations were reached. This measurement is obviously unfair from \Pex perspective, since its task is to cover an analysed benchmark by tests and not to reach a single particular location in it. Therefore, we clarify the right meaning of the measurement now.

Our only goal here is to show, that \Pex could benefit from our algorithm. Typical scenario when running \Pex on a benchmark is that all the code except the target location is covered in few seconds (typically up to three). Then \Pex keeps searching space of program paths for a longer time without covering the target location. This is exactly the situation when our algorithm should be activated. We of course do not know the exact moment, when \Pex would activate it. Therefore, we can only provide running times of our algorithm as it was activated at the beginning of the analysis.

Before we present the results, we discuss the benchmarks. Benchmark HWM checks whether an input string contains four substrings \texttt{Hello}, \texttt{world}, \texttt{at} and \texttt{Microsoft!}. It does not matter at which position and in which order the words occur in the string. The target location can be reached only when all the words are presented in the string. This benchmark was introduced in~\cite{AGT08}. The benchmark consists of four loops in a sequence, where each loop searches for a single of the four words mentioned above. Each loop checks for an occurrence of a related word at each position in the input string starting from the beginning. Benchmark HWM is the most complicated one from our set of benchmarks. We also took its two lightened versions presented in~\cite{OT11}: Benchmark HW consists of two loops searching the input string for the first two words above. And benchmark Hello searches only for the first one.

Benchmark MatrIR scans upper triangle of an input matrix. The matrix can be of any rank bigger then $20 \times 20$. In each row we count a number of elements inside a fixed range $(10,100)$. When a count for any row exceeds a fixed limit $15$, then the target location is reached.

Benchmarks OneLoop and TwoLoops originate from~\cite{OT11}. They are designed such that their target locations are not reachable. Both benchmarks contain a loop in which the variable \var{i} (initially set to 0) is increased by 4 in each iteration. The target location is then guarded by an assertion \texttt{i==15} in OneLoop benchmark and by a loop\texttt{~while (i != j + 7) j += 2~}in the second one. We note that \var{j} is initialized to $0$ before the loop.

The last benchmark WinDriver comes from a practice and we took it from~\cite{GLE09}. It is a part of a Windows driver processing a stream of network packets. It reads an input stream and decomposes it into a two dimensional array of packets. A position in the array where the data from the stream are copied into are encoded in the input stream itself. We marked the target location as a failure branch of a consistency check of the filled in array. It was discussed in the paper~\cite{GLE09} the consistency check can indeed be broken.

\begin{table*}[!htb]
	\begin{center}
    \begin{tabular}{c||c|c|c|c|c}
      & \Pex & \multicolumn{4}{c}{\APC} \\
      \cline{2-6}
      \textbf{Benchmark} & \textbf{Total} & \textbf{Total} & Bld $\mathit{apc}$ & Unf/SMT $\mathit{apc}^{\vec{K}}$ & SMT $\mathit{apc}$ \\
      \hline
      Hello       & 5.257 & 0.181 & 0.021 & 0.290 / S 0.060 & S 0.160 \\
      HW          & 25.05 & 0.941 & 0.073 & 0.698 / S 0.170 & S 13.84 \\
      HWM         & T/O   & 4.660 & 1.715 & 2.135 / S 0.810 & X M/O~ \\
      MatrIR      & 95.00 & 0.035 & 0.015 & 0.491 / S 70.80 & S 0.020 \\
      WinDriver   & 28.39 & 0.627 & 0.178 & 0.369 / S 0.080 & X 4.860 \\
      \hline
      OneLoop     & 134.0 & 0.003 & 0.001 & 0.001 / U 0.001 & U 0.010 \\
      TwoLoops    & 64.00 & 0.003 & 0.002 & 0.004 / U 0.010 & U 0.001
    \end{tabular}
	\end{center}
  \caption{Running times of \Pex and \APC on benchmarks.}
  \label{tab:experiments}
\end{table*}

The experimental results are depicted in Table~\ref{tab:experiments}. They show running times in seconds of \Pex and \APC on the benchmarks. We did all the measurements on a single common desktop computer\footnote{Intel\textsuperscript{\textregistered} Core$^{\mathtt{TM}}$ i7 CPU 920 @ 2.67GHz 2.67GHz, 6GB RAM, Windows 7 Professional 64-bit, MS \Pex 0.92.50603.1, MS Moles 1.0.0.0, MS Visual Studio 2008, MS .NET Framework v3.5 SP1, MS \Z SMT solver v3.2, and boost v1.42.0.}. The mark T/O in \Pex column indicates that it failed to reach the target location within an hour. For \APC we provide the total running times and also time profiles of different paths of the computation. In sub-column 'Bld $\mathit{apc}$' there are times required to build the necessary condition $\mathit{apc}$. In sub-column 'Unf/SMT $\mathit{apc}^{\vec{K}}$' there are two times for each benchmark. The first number identifies a time spent by unfolding the formula $\mathit{apc}$ into $\mathit{apc}^{\vec{K}}$. We use a fixed number 25 for all the counters and benchmarks. The second number represent a time spent by \Z SMT solver~\cite{Z3} to decide satisfiability of the unfolded formula $\mathit{apc}^{\vec{K}}$. Characters in front of these times identify results of the queries: S for satisfiable, U for unsatisfiable and X for unknown. And the last sub-column 'SMT $\mathit{apc}$' contains running times of \Z SMT solver directly on formulae $\mathit{apc}$. The mark M/O means that \Z went out of memory. As we explained in Section~\ref{sec:quantifiers} the construction and satisfiability checking of $\mathit{apc}^{\vec{K}}$ runs in parallel with satisfiability checking of $\mathit{apc}$. Therefore, we take the minimum of the times to compute the total runing time of \APC.


\section{Related Work}\label{sec:related}

Early work on symbolic execution~\cite{Kin76,BEL75,How77} showed its effectiveness in test generation. 
King further showed that symbolic execution can bring more automation into Floyd's inductive proving method~\cite{Kin76,Floyd67}. Nevertheless, loops as the source of the path explosion problem were not in the center of interest.

More recent approaches dealt mostly with limitations of SMT solvers and the environment problem by combining the symbolic execution with the concrete one~\cite{PKS05,AGT08,SMA05,GLM08:active_props,G07,GLM08:fuzzing,GKL08,TdH08,GLM08:fuzzing,PRV11}. Although practical usability of the symbolic execution improved, these approaches still suffer from the path explosion problem.
An interesting idea
is to combine the symbolic execution with a complementary technique~\cite{GNRT10,GMR09,Beckmanetal08,NRTT09,Gulavanietal06}. Complementary techniques typically perform differently on different parts of the analysed program. Therefore, an information exchange between the techniques leads to a mutual improvement of their performance.
There are also techniques based on saving of already observed program behaviour and early terminating those executions, whose further progress will not explore a new one~\cite{BCE08,Cadar08,CDE08}.
Compositional approaches are typically based on computation of function summaries~\cite{G07,AGT08}. A function summary often consists of pre and post condition. Preconditions identify paths through the function and postconditions capture effects of the function along those paths. Reusing these summaries at call sites typically leads to an interesting performance improvement. In addition the summaries may insert additional symbolic values into the path condition which causes another improvement. 
And there are also techniques partitioning program paths into separate classes according to similarities in program states~\cite{QNR11,SH10}. Values of output variables of a program or function are typically considered as a partitioning criteria. 
A search strategy Fitnex~\cite{XTHS09} implemented in \Pex~\cite{TdH08} uses state-dependent fitness values computed through a fitness function to guide a path exploration. The function measures how close an already discovered feasible path is to a particular target location (to be covered by a test). The fitness function computes the fitness value for each occurrence of a predicate related to a chosen program branching along the path. The minimum value is the resulting one. There are also orthogonal approaches dealing with the path explosion problem by introducing some assumptions about program input. There are, for example, specialized techniques for programs manipulating strings~\cite{BTV09,XGM08}, and techniques reducing input space by a given grammar~\cite{GKL08,SPmCS09}.

Although the techniques above showed performance improvements when dealing with the path explosion problem, they do not focus directly on loops.
The LESE~\cite{SPmCS09} approach introduces symbolic variables for the number of times each loop was executed and links these with features of a known input grammar such as variable-length or repeating fields. This allows the symbolic constraints to cover a class of paths that includes different number of loop iterations, expressing loop-dependent program values in terms of 
the input.
A technique presented in~\cite{GL11} analyses loops on-the-fly, i.e.~during simultaneous concrete and symbolic execution of a program for a concrete input. The loop analysis infers inductive variables. A variable is inductive if it is modified by a constant value in each loop iteration. These variables are used to build loop summaries expressed in a form of pre a post conditions. The summaries are derived from the partial loop invariants synthesized dynamically using pattern matching rules on the loop guards and induction variables.
In our previous work~\cite{OT11} we introduced an algorithm sharing the same goal as one presented here. Nevertheless, in~\cite{OT11} we transform an analysed program into chains and we do the remaining analysis there. For each chain with sub-chains we build a constraint system serving as an oracle for steering the symbolic execution in the path space towards the target location.


\section{Conclusion}\label{sec:conclusion}

We presented algorithm computing for a given target program location the necessary condition $\mathit{apc}$ representing an over-approximated set of real program paths leading to the target. We proposed the use of $\mathit{apc}$ in tests generation tools based on symbolic execution. Having $\mathit{apc}$ such a tool can cover the target location faster by exploring only program paths in the over-approximated set. We also showed that $\mathit{apc}$ can be used in the tools very easily and naturally. And we finally showed by the experimental results that \Pex could benefit from our algorithm.




\bibliographystyle{plain}
\bibliography{apc}

\begin{thebibliography}{10}

\bibitem{AGT08}
S.~Anand, P.~Godefroid, and N.~Tillmann.
\newblock Demand-driven compositional symbolic execution.
\newblock In {\em TACAS'08}, volume 4963 of {\em LNCS}, pages 367--381.
  Springer, 2008.

\bibitem{Beckmanetal08}
N.~E. Beckman, A.~V. Nori, S.~K. Rajamani, R.~J. Simmons, S.~Tetali, and A.~V.
  Thakur.
\newblock Proofs from tests.
\newblock In {\em ISSTA '08}, pages 3--14. ACM, 2008.

\bibitem{BTV09}
N.~Bj{\o}rner, N.~Tillmann, and A.~Voronkov.
\newblock Path feasibility analysis for string-manipulating programs.
\newblock In {\em TACAS '09}, pages 307--321. Springer-Verlag, 2009.

\bibitem{BCE08}
P.~Boonstoppel, C.~Cadar, and D.~Engler.
\newblock {RWset}: attacking path explosion in constraint-based test
  generation.
\newblock In {\em TACAS'08/ETAPS'08}, pages 351--366. Springer-Verlag, 2008.

\bibitem{BEL75}
R.~S. Boyer, B.~Elspas, and K.~N. Levitt.
\newblock {SELECT}---a formal system for testing and debugging programs by
  symbolic execution.
\newblock In {\em Proceedings of the international conference on Reliable
  software}, pages 234--245. ACM, 1975.

\bibitem{CDE08}
C.~Cadar, D.~Dunbar, and D.~Engler.
\newblock {KLEE}: Unassisted and automatic generation of high-coverage tests
  for complex systems programs.
\newblock In {\em OSDI'08}, pages 209--224. USENIX Association, 2008.

\bibitem{Cadar08}
C.~Cadar, V.~Ganesh, P.~M. Pawlowski, D.~L. Dill, and D.~R. Engler.
\newblock {EXE}: Automatically generating inputs of death.
\newblock {\em ACM Trans. Inf. Syst. Secur.}, 12(2):1--38, 2008.

\bibitem{Floyd67}
R.~W. Floyd.
\newblock Assigning meanings to programs.
\newblock In {\em Proceedings of a Symposium on Applied Mathematics},
  volume~19, pages 19--31, 1967.

\bibitem{G07}
P.~Godefroid.
\newblock Compositional dynamic test generation.
\newblock In {\em POPL '07}, pages 47--54. ACM, 2007.

\bibitem{GKL08}
P.~Godefroid, A.~Kiezun, and M.~Y. Levin.
\newblock Grammar-based whitebox fuzzing.
\newblock In {\em PLDI '08}, pages 206--215. ACM, 2008.

\bibitem{PKS05}
P.~Godefroid, N.~Klarlund, and K.~Sen.
\newblock {DART}: directed automated random testing.
\newblock In {\em PLDI '05}, pages 213--223. ACM, 2005.

\bibitem{GLM08:active_props}
P.~Godefroid, M.~Y. Levin, and D.~A. Molnar.
\newblock Active property checking.
\newblock In {\em EMSOFT '08}, pages 207--216. ACM, 2008.

\bibitem{GLM08:fuzzing}
P.~Godefroid, M.~Y. Levin, and D.~A. Molnar.
\newblock Automated whitebox fuzz testing.
\newblock In {\em Network Distributed Security Symposium (NDSS)}, pages
  151--166, 2008.

\bibitem{GLE09}
P.~Godefroid, X.~Levin, and X.~Elkarablieh.
\newblock Precise pointer reasoning for dynamic test generation.
\newblock In {\em ISSTA'09}, 2009.

\bibitem{GL11}
P.~Godefroid and D.~Luchaup.
\newblock Automatic partial loop summarization in dynamic test generation.
\newblock In {\em ISSTA '11}, pages 23--33. ACM, 2011.

\bibitem{GNRT10}
P.~Godefroid, A.~V. Nori, S.~K. Rajamani, and S.~D. Tetali.
\newblock Compositional may-must program analysis: unleashing the power of
  alternation.
\newblock In {\em POPL '10}, pages 43--56. ACM, 2010.

\bibitem{Gulavanietal06}
B.~S. Gulavani, T.~A. Henzinger, Y.~Kannan, A.~V. Nori, and S.~K. Rajamani.
\newblock {SYNERGY}: a new algorithm for property checking.
\newblock In {\em SIGSOFT '06/FSE-14}, pages 117--127. ACM, 2006.

\bibitem{GMR09}
A.~Gupta, R.~Majumdar, and A.~Rybalchenko.
\newblock From tests to proofs.
\newblock In {\em TACAS '09}, pages 262--276. Springer-Verlag, 2009.

\bibitem{How77}
William~E. Howden.
\newblock Symbolic testing and the {DISSECT} symbolic evaluation system.
\newblock {\em IEEE Trans. Software Eng.}, 3(4):266--278, 1977.

\bibitem{Kin76}
J.~C. King.
\newblock Symbolic execution and program testing.
\newblock {\em Commun. ACM}, 19(7):385--394, 1976.

\bibitem{NRTT09}
A.~V. Nori, S.~K. Rajamani, S.~Tetali, and A.~V. Thakur.
\newblock The {Yogi} project: Software property checking via static analysis
  and testing.
\newblock In {\em TACAS '09}, pages 178--181. Springer-Verlag, 2009.

\bibitem{OT11}
J.~Obdr\v{z}\'{a}lek and M.~Trt\'{i}k.
\newblock Efficient loop navigation for symbolic execution.
\newblock In {\em ATVA '11}, pages 453--462. LNCS, 2011.

\bibitem{PRV11}
C.~S. P\u{a}s\u{a}reanu, N.~Rungta, and W.~Visser.
\newblock Symbolic execution with mixed concrete-symbolic solving.
\newblock In {\em ISSTA '11}, pages 34--44. ACM, 2011.

\bibitem{QNR11}
D.~Qi, H.~D.~T. Nguyen, and A.~Roychoudhury.
\newblock Path exploration based on symbolic output.
\newblock In {\em ESEC/FSE '11}, pages 278--288. ACM, 2011.

\bibitem{SH10}
R.~Santelices and M.~J. Harrold.
\newblock Exploiting program dependencies for scalable multipe-path symbolic
  execution.
\newblock In {\em ISSTA '10}, pages 195--206. ACM, 2010.

\bibitem{SPmCS09}
P.~Saxena, P.~Poosankam, S.~McCamant, and D.~Song.
\newblock Loop-extended symbolic execution on binary programs.
\newblock In {\em ISSTA '09}, pages 225--236. ACM, 2009.

\bibitem{SMA05}
K.~Sen, D.~Marinov, and G.~Agha.
\newblock {CUTE}: a concolic unit testing engine for {C}.
\newblock In {\em ESEC/FSE-13}, pages 263--272. ACM, 2005.

\bibitem{TdH08}
N.~Tillmann and J.~de~Halleux.
\newblock Pex -- white box test generation for .{NET}.
\newblock In {\em TAP'08}, volume 4966 of {\em LNCS}, pages 134--153. Springer,
  2008.

\bibitem{XTHS09}
X.~Tao Xie, N.~Tillmann, J.~de~Halleux, and W.~Schulte.
\newblock Fitness-guided path exploration in dynamic symbolic execution.
\newblock In {\em DSN '09}, pages 359--368, 2009.

\bibitem{XGM08}
R.~G. Xu, P.~Godefroid, and R.~Majumdar.
\newblock Testing for buffer overflows with length abstraction.
\newblock In {\em ISSTA '08}, pages 27--38. ACM, 2008.

\bibitem{Z3}
\texttt{http://research.microsoft.com/projects/Z3}.

\end{thebibliography}


\appendix
\newpage


\section{Listing of Benchmarks in C\#}

\begin{verbatim}
public static void Hello(string A)
{
    string H = "Hello\0"; int h = 0;
    for (int i = 0; A[i] != 0; ++i)
    {
        int j = i, k = 0;
        while (H[k] != 0 && A[j] != 0 &&
               A[j] == H[k])
        { ++j; ++k; }
        if (H[k] == 0) { h = 1; break; }
        if (A[j] == 0) break;
    }
    if (h == 1)
        throw new Exception("Hello - reached!!");
}
\end{verbatim}

\begin{verbatim}
public static void HW(string A)
{
    string H = "Hello\0"; int h = 0;
    for (int i = 0; A[i] != 0; ++i)
    {
        int j = i, k = 0;
        while (H[k] != 0 && A[j] != 0 &&
        		   A[j] == H[k])
        { ++j; ++k; }
        if (H[k] == 0) { h = 1; break; }
        if (A[j] == 0) break;
    }
    string W = "World\0"; int w = 0;
    for (int i = 0; A[i] != 0; ++i)
    {
        int j = i, k = 0;
        while (W[k] != 0 && A[j] != 0 &&
               A[j] == W[k])
        { ++j; ++k; }
        if (W[k] == 0) { w = 1; break; }
        if (A[j] == 0) break;
    }
    if (h == 1 && w == 1)
        throw new Exception("HW - reached!!");
}
\end{verbatim}

\begin{verbatim}
public static void HWM(string A)
{
    string H = "Hello\0"; int h = 0;
    for (int i = 0; A[i] != 0; ++i)
    {
        int j = i, k = 0;
        while (H[k] != 0 && A[j] != 0 &&
        		   A[j] == H[k])
        { ++j; ++k; }
        if (H[k] == 0) { h = 1; break; }
        if (A[j] == 0) break;
    }
    string W = "World\0"; int w = 0;
    for (int i = 0; A[i] != 0; ++i)
    {
        int j = i, k = 0;
        while (W[k] != 0 && A[j] != 0 &&
               A[j] == W[k])
        { ++j; ++k; }
        if (W[k] == 0) { w = 1; break; }
        if (A[j] == 0) break;
    }
    string T = "At\0"; int t = 0;
    for (int i = 0; A[i] != 0; ++i)
    {
        int j = i, k = 0;
        while (T[k] != 0 && A[j] != 0 &&
               A[j] == T[k])
        { ++j; ++k; }
        if (T[k] == 0) { t = 1; break; }
        if (A[j] == 0) break;
    }
    string M = "Microsoft!\0"; int m = 0;
    for (int i = 0; A[i] != 0; ++i)
    {
        int j = i, k = 0;
        while (M[k] != 0 && A[j] != 0 &&
               A[j] == M[k])
        { ++j; ++k; }
        if (M[k] == 0) { m = 1; break; }
        if (A[j] == 0) break;
    }
    if (h == 1 && w == 1 && t == 1 && m == 1)
        throw new Exception("HWM - reached!!");
}
\end{verbatim}

\begin{verbatim}
public static void MatrIR(int[,] A, int m, int n)
{
    int w = 0;
    for (int i = 0; i < m; ++i)
    {
        int k = 0;
        for (int j = i; j < n; ++j)
            if (A[i, j] > 10 && A[i, j] < 100)
                ++k;
        if (k > 15)
        {
            w = 1;
            break;
        }
    }
    if (m > 20 && n > 20 && w == 1)
        throw new Exception("MatrIR - reached!!");
}
\end{verbatim}

\begin{verbatim}
public static void OneLoop(int n)
{
    int i = 0;
    while (i < n) i += 4;
    if (i == 15)
        throw new Exception("OneLoop - reached!!");
}
\end{verbatim}

\begin{verbatim}
public static void TwoLoops(int n)
{
    int i = 0, j = 0;
    while (i < n) i += 4;
    while (i != j + 7) j += 2;
    throw new Exception("TwoLoops - reached!!");
}
\end{verbatim}

\begin{verbatim}
public static void WinDriver(int[,] multi_array,
    int[] buffer, int MAX_PACKET, int PACKET_SIZE)
{
    for (int i = 0; i < MAX_PACKET; ++i)
        for (int j = 0; j < PACKET_SIZE; j++)
            multi_array[i,j] = 0;
    int number_of_packets;
    int packet_id;
    number_of_packets = (int)buffer[0];
    if ((number_of_packets > MAX_PACKET) ||
        (number_of_packets < 0))
        return;
    for (int i = 0; i < number_of_packets; i++)
    {
        packet_id =
           (int)buffer[(i * (PACKET_SIZE + 1)) + 1];
        if ((packet_id >= MAX_PACKET) ||
            (packet_id < 0))
            return;
        for (int j = 0; j < PACKET_SIZE; j++)
            multi_array[packet_id,j] =
               buffer[(i * (PACKET_SIZE + 1)) + j + 2];
    }
    if ((number_of_packets < MAX_PACKET) &&
        (multi_array[number_of_packets,0] != 0) &&
        PACKET_SIZE > 20)
        throw new Exception("winDrw - reached!!");
}
\end{verbatim}


\end{document}